\documentclass[aps,prl,twocolumn,superscriptaddress,longbibliography]{revtex4-2}

\usepackage{amsmath,amssymb,amsthm,mathtools,bm}
\usepackage{graphicx}
\usepackage[hidelinks]{hyperref}

\newtheorem{theorem}{Theorem}

\newtheorem{lemma}{Lemma}

\newcommand{\tr}{\mathrm{Tr}}
\newcommand{\id}{\mathbf{1}}

\newcommand{\half}{\tfrac{1}{2}}

\newcommand{\Jq}{J_{\mathrm{q}}}
\newcommand{\cgeom}{c_{\mathrm{geom}}}

\newcommand{\sympl}{\sigma}
\newcommand{\cov}{\Gamma}
\newcommand{\drift}{K}
\newcommand{\diffu}{D}
\newcommand{\CPmat}{M}
\newcommand{\Bures}{\mathrm{Bures}}
\newcommand{\BKM}{\mathrm{BKM}}

\begin{document}

\title{Score Reversal Is Not Free for Quantum Diffusion Models}

\author{Ammar Fayad}
\affiliation{Massachusetts Institute of Technology, Cambridge, Massachusetts 02139, USA}

\begin{abstract}
Diffusion-based generative modeling suggests reversing a noising semigroup by adding a score drift. For continuous-variable Gaussian Markov dynamics, complete positivity couples drift and diffusion at the generator level. For a quantum-limited attenuator with thermal parameter $\nu$ and squeezing $r$, the fixed-diffusion Wigner-score (Bayes) reverse drift violates CP iff $\cosh(2r)>\nu$. Any Gaussian CP repair must inject extra diffusion, implying $-2\ln F\ge \cgeom(\nu_{\min})\,\mathcal{I}_{\mathrm{dec}}^{\mathrm{wc}}$.
\end{abstract}

\maketitle

\paragraph{Introduction.}
Time reversal of Markov dynamics is ``free'' classically: for linear-Gaussian diffusions, the Bayes reverse drift produces a valid reverse diffusion at fixed $D^{\mathrm{cl}}\succeq 0$~\cite{Anderson1982,Haussmann1986}.
Quantum diffusion proposals frequently follow this logic by extracting a reverse drift from the Wigner--Fokker--Planck equation and lifting it to a Gaussian quantum reverse semigroup~\cite{Nasu2025}.
However, for continuous-variable Gaussian Markov dynamics, complete positivity imposes the generator constraint
$M=D+i(K\sigma+\sigma K^T)\succeq 0$~\cite{HHW2010}:
drift and diffusion cannot be specified independently.

We show a sharp quantum obstruction to the fixed-diffusion score-lift.
For the one-mode quantum-limited attenuator and a squeezed-thermal running reference $(\nu,r)$, the Bayes/Wigner score term drives $M$ nonpositive exactly when $\cosh(2r)>\nu$ (Theorem~\ref{thm:nogo}), a phase boundary with no classical analogue.
Enforcing CP within the Gaussian family therefore requires additional diffusion; combining the Gaussian de~Bruijn identity~\cite{Toscano2021} with Petz monotone-metric comparison~\cite{Petz1996,mosonyi2011quantum} yields an operational worst-case fidelity floor (Theorem~\ref{thm:floor}).

This Letter diagnoses the failure mode of the \emph{fixed-diffusion Gaussian semigroup score-lift}, i.e.\ the direct continuous-variable analogue of classical score reversal; in particular, the Bayes/score reverse drift at fixed diffusion is written explicitly in reverse-diffusion form in Ref.~\cite{Nasu2025}.
Our results do not contradict Petz recovery, which is CP by construction~\cite{Kwon2022}, nor do they preclude non-Gaussian/measurement-based approaches (e.g.\ Ref.~\cite{liu2025measurement}) that leave the fixed-diffusion Gaussian semigroup ansatz; rather, they provide a sharp obstruction and a quantitative benchmark \emph{within} the Gaussian reverse-semigroup class.\footnote{Reverse-time sign conventions and invariance of the CP spectrum under $K\mapsto -K$ are given in Supplemental Material~\cite{SM}.}

\paragraph{Gaussian quantum channels and complete positivity.}
We use the Heinosaari--Holevo--Wolf (HHW) conventions~\cite{HHW2010}: canonical operators satisfy $[Q_j,P_k]=i\delta_{jk}$, the symplectic form is
$\sigma=\bigoplus_j\!\left(\begin{smallmatrix}0&1\\-1&0 \end{smallmatrix}\right)$, and covariance matrices satisfy
$\Gamma+i\sigma\succeq 0$ (vacuum symplectic eigenvalues $\nu_k=1$).

A centered Gaussian channel acts on Weyl operators as
$W_\xi \mapsto W_{X^T\xi}\exp(-\half\,\xi^TY\xi)$.
Complete positivity is equivalent to~\cite{HHW2010}
\begin{equation}\label{eq:CP}
Y \succeq i(\sigma - X\sigma X^T).
\end{equation}
For a Gaussian dynamical semigroup with covariance generator
$\dot{\Gamma}_t = K_t\Gamma_t + \Gamma_t K_t^T + D_t$ (infinitesimally, $X=\id + K\,dt$, $Y=D\,dt$),
expanding~\eqref{eq:CP} to first order yields the \emph{generator CP matrix}
\begin{equation}\label{eq:genCP}
M_t := D_t + i(K_t\sigma + \sigma K_t^T) \succeq 0.
\end{equation}
This is the uncertainty principle at the level of Gaussian generators: drift cannot be chosen independently of diffusion.

\paragraph{Classical score reversal is free; quantum is not.}
For linear-Gaussian GKLS generators (quadratic Hamiltonian, linear Lindblad operators), the Wigner function evolves by a Fokker--Planck equation~\cite{gardiner2004quantum}
\begin{equation}\label{eq:FP}
\partial_s W_s = -\nabla\!\cdot\!(K_s x\, W_s) + \half\nabla^T\!\big(D_s^{\mathrm{cl}}\nabla W_s\big).
\end{equation}
For Gaussian marginals $W_s\propto \exp(-\half x^T\Sigma_s^{-1}x)$, classical time reversal gives the Bayes reverse drift~\cite{Anderson1982,Haussmann1986}
\begin{equation}\label{eq:revdrift}
K^{\mathrm{Bayes}} = K^{\mathrm{fwd}} + D^{\mathrm{fwd}}\Gamma_{\tau_s}^{-1},\qquad
D^{\mathrm{Bayes}} = D^{\mathrm{fwd}},
\end{equation}
where $\Gamma_{\tau_s}$ is the covariance of the running Gaussian reference in HHW units.
Classically, validity requires only $D^{\mathrm{cl}}\succeq 0$.
Quantum mechanically, the lifted generator must additionally satisfy~\eqref{eq:genCP}, which can fail.

\paragraph{No-go theorem.}
We specialize to the one-mode phase-covariant quantum-limited attenuator,
\begin{equation}
K^{\mathrm{fwd}} = -\gamma\id_2,\qquad D^{\mathrm{fwd}} = 2\gamma\id_2,
\end{equation}
and take a squeezed-thermal running reference
$\Gamma_{\tau_s}=\mathrm{diag}(\nu e^{2r},\nu e^{-2r})$ with $\nu\ge 1$, $r\ge 0$ (HHW vacuum $\nu=1$).

\begin{theorem}[No-go for noiseless quantum score reversal]\label{thm:nogo}
The Bayes reverse candidate~\eqref{eq:revdrift} has generator CP matrix
\begin{equation}
M^{\mathrm{Bayes}}
= \underbrace{2\gamma(\id_2 - i\sigma)}_{\displaystyle M^{\mathrm{fwd}}\succeq 0}
+ \underbrace{2i\gamma\!\left(\Gamma_{\tau_s}^{-1}\sigma + \sigma\Gamma_{\tau_s}^{-1}\right)}_{\displaystyle \Delta M\ \text{(Hermitian, traceless)}}.
\end{equation}
Let $u = \tfrac{1}{\sqrt{2}}(1,i)^T$ (so $\sigma u = iu$). Then
\begin{equation}\label{eq:threshold}
\boxed{u^\dagger M^{\mathrm{Bayes}} u = 4\gamma\!\left(1 - \frac{\cosh(2r)}{\nu}\right).}
\end{equation}
Consequently, $M^{\mathrm{Bayes}}\not\succeq 0$ (CP violation) \emph{iff}
\begin{equation}\label{eq:nogo}
\cosh(2r)>\nu.
\end{equation}
In particular, for unsqueezed thermal references ($r=0$) the obstruction never activates for $\nu\ge 1$; squeezing is the source of CP failure in this phase-covariant one-mode setting.
\end{theorem}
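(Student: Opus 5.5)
The plan is to compute $M^{\mathrm{Bayes}}$ directly from the generator CP matrix~\eqref{eq:genCP}, inserting the Bayes drift and diffusion~\eqref{eq:revdrift}. Then I will exploit the fact that for this phase-covariant attenuator everything collapses onto the span of $\id_2$ and $\sigma$, so that $M^{\mathrm{Bayes}}$ can be diagonalized exactly by the eigenvectors of $i\sigma$.

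\emph{Step 1 (decomposition).} With $K^{\mathrm{Bayes}}=-\gamma\id_2+2\gamma\Gamma_{\tau_s}^{-1}$ and $D^{\mathrm{Bayes}}=2\gamma\id_2$, I will use $\sigma^T=-\sigma$ and the symmetry of $\Gamma_{\tau_s}^{-1}$ to get $(\Gamma_{\tau_s}^{-1}\sigma)^T=-\sigma\Gamma_{\tau_s}^{-1}$. The forward part gives $K^{\mathrm{fwd}}\sigma+\sigma K^{\mathrm{fwd}T}=-2\gamma\sigma$, hence $M^{\mathrm{fwd}}=2\gamma(\id_2-i\sigma)$. The score part gives $\Delta M=2i\gamma(\Gamma_{\tau_s}^{-1}\sigma+\sigma\Gamma_{\tau_s}^{-1})$, which is exactly the claimed split. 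Positivity of $M^{\mathrm{fwd}}$ follows because $i\sigma$ is Hermitian with eigenvalues $\pm1$.

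\emph{Step 2 (the key identity).} For $\Gamma_{\tau_s}^{-1}=\mathrm{diag}(a,b)$ with $a=e^{-2r}/\nu$ and $b=e^{2r}/\nu$, a direct $2\times2$ multiplication gives
\[
\Gamma_{\tau_s}^{-1}\sigma+\sigma\Gamma_{\tau_s}^{-1}=(a+b)\,\sigma=\frac{2\cosh(2r)}{\nu}\,\sigma .
\]
So $\Delta M$ is a real multiple of $i\sigma$, which makes it Hermitian and traceless. This is the only step with real content, and I expect it to be the main point to check carefully: the anisotropy $a\neq b$ cancels in the anticommutator-like combination, and only the sum $a+b=\tr\,\Gamma_{\tau_s}^{-1}$ survives. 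One therefore gets
\[
M^{\mathrm{Bayes}}=2\gamma\id_2+2\gamma\Big(\tfrac{2\cosh(2r)}{\nu}-1\Big)i\sigma .
\]

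\emph{Step 3 (spectrum and threshold).} Since $M^{\mathrm{Bayes}}$ is a combination of $\id_2$ and $i\sigma$, its eigenvectors are those of $\sigma$. For $u=\tfrac1{\sqrt2}(1,i)^T$ one has $\sigma u=iu$, so $i\sigma u=-u$. This gives $u^\dagger M^{\mathrm{Bayes}}u=2\gamma-2\gamma(2\cosh(2r)/\nu-1)=4\gamma(1-\cosh(2r)/\nu)$, which is \eqref{eq:threshold}. The orthogonal eigenvector $\bar u$ gives eigenvalue $4\gamma\cosh(2r)/\nu>0$. Hence the full spectrum is known, and $M^{\mathrm{Bayes}}\succeq0$ iff $\cosh(2r)\le\nu$, which yields the iff in~\eqref{eq:nogo}.

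For $r=0$ the condition reads $1>\nu$, which is excluded by $\nu\ge1$; this gives the final remark.
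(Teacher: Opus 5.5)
Your proposal is correct and follows the paper's proof essentially step for step. You substitute $K^{\mathrm{Bayes}}=-\gamma\id_2+2\gamma\Gamma_{\tau_s}^{-1}$, use the $2\times 2$ identity $\Gamma_{\tau_s}^{-1}\sigma+\sigma\Gamma_{\tau_s}^{-1}=(\tr\,\Gamma_{\tau_s}^{-1})\,\sigma$ to write $M^{\mathrm{Bayes}}$ as a combination of $\id_2$ and $i\sigma$, and read off the two eigenvalues on $u$ and $\bar u$. The only cosmetic difference is that you obtain $u^\dagger M^{\mathrm{Bayes}}u$ from this collapsed form, whereas the paper first evaluates it directly via $u^\dagger\Gamma_{\tau_s}^{-1}u=\cosh(2r)/\nu$.
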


\emph{Proof.}
Substituting $K^{\mathrm{Bayes}} = -\gamma\id_2 + 2\gamma\Gamma_{\tau_s}^{-1}$ into~\eqref{eq:genCP} gives
$M^{\mathrm{Bayes}}=2\gamma(\id_2-i\sigma)+2i\gamma(\Gamma_{\tau_s}^{-1}\sigma+\sigma\Gamma_{\tau_s}^{-1})$.
For $u=\tfrac{1}{\sqrt{2}}(1,i)^T$ with $\sigma u=iu$,
\[
u^\dagger M^{\mathrm{Bayes}}u
=4\gamma\!\left(1-u^\dagger\Gamma_{\tau_s}^{-1}u\right)
=4\gamma\!\left(1-\frac{\cosh(2r)}{\nu}\right),
\]
yielding~\eqref{eq:threshold}.
Moreover, since $\Gamma_{\tau_s}^{-1}=\mathrm{diag}(a,b)$ is diagonal in the one-mode case,
a direct $2\times 2$ computation gives
$\Gamma_{\tau_s}^{-1}\sigma+\sigma\Gamma_{\tau_s}^{-1}=(a+b)\sigma
= (\tr\,\Gamma_{\tau_s}^{-1})\,\sigma$ with $\tr(\Gamma_{\tau_s}^{-1})=2\cosh(2r)/\nu$.
Thus $M^{\mathrm{Bayes}}$ is a real linear combination of $\id_2$ and $i\sigma$:
\[
M^{\mathrm{Bayes}}=2\gamma\,\id_2+\alpha\, i\sigma,\qquad
\alpha:=2\gamma\,\tr(\Gamma_{\tau_s}^{-1})-2\gamma.
\]
Since $i\sigma$ has eigenvalues $\pm 1$ with eigenvectors $u,u^*$, the eigenvalues of $M^{\mathrm{Bayes}}$ are
\[
\lambda_{+}=2\gamma+\alpha=2\gamma\,\tr(\Gamma_{\tau_s}^{-1})
=4\gamma\,\frac{\cosh(2r)}{\nu}\ge 0,
\]
\[
\lambda_{-}=2\gamma-\alpha=4\gamma-2\gamma\,\tr(\Gamma_{\tau_s}^{-1})
=4\gamma\Bigl(1-\frac{\cosh(2r)}{\nu}\Bigr).
\]
Hence $M^{\mathrm{Bayes}}\succeq 0$ iff $\lambda_{-}\ge 0$, i.e.\ $\cosh(2r)\le \nu$. \hfill$\square$

\paragraph{Physical picture.}
Write $M^{\mathrm{Bayes}}=M^{\mathrm{fwd}}+\Delta M$ where $\Delta M$ is Hermitian and traceless.
The forward matrix $M^{\mathrm{fwd}}=2\gamma(\id_2-i\sigma)$ has eigenvalues $\{0,\,4\gamma\}$: one direction sits exactly at the CP boundary, while the other has a $4\gamma$ ``buffer.''
Because $\Delta M$ is traceless, it redistributes the spectrum: it raises one eigenvalue while lowering the other.
As squeezing increases, the lowered eigenvalue crosses below zero precisely when $\cosh(2r)>\nu$, producing a sharp CP phase boundary.

\paragraph{The quantum noise floor.}
To restore CP within the Gaussian family, a repaired reverse decoder must add diffusion
\begin{equation}
D^{\mathrm{rev}} = D^{\mathrm{Bayes}} + \Delta D_{\mathrm{qu}},\quad
\Delta D_{\mathrm{qu}}\succeq 0,\quad
M^{\mathrm{Bayes}} + \Delta D_{\mathrm{qu}} \succeq 0.
\end{equation}
The minimal repair is a pointwise semidefinite program (SDP),
\begin{equation}\label{eq:SDP}
\Delta D_{\mathrm{qu}}^\star(s)
=\arg\min_{\substack{\Delta D=\Delta D^T\succeq 0\\ M^{\mathrm{Bayes}}(s)+\Delta D\succeq 0}}
\tr\!\big(\Delta D\,\Jq[\tau_s]\big),
\end{equation}
weighted by the BKM quantum Fisher information~\cite{Toscano2021}.
The Gaussian quantum de~Bruijn identity~\cite{Toscano2021} quantifies the entropy-rate increment from added diffusion:
\begin{equation}\label{eq:deBruijn}
\Delta\!\left(\frac{d}{ds}S(\rho_s)\right)
= \half\,\tr\!\big(\Delta D_{\mathrm{qu}}(s)\,\Jq[\rho_s]\big)\ \ge 0.
\end{equation}
For a prescribed family of target states $\mathcal{C}$, define the worst-case (wc) decoder irreversibility
$\mathcal{I}_{\mathrm{dec}}^{\mathrm{wc}}(S):=\sup_{\rho_0\in\mathcal{C}}\int_0^S \half\,\tr(\Delta D_{\mathrm{qu}}(s)\Jq[\rho_s])\,ds$.
Full definitions and conventions are in Supplemental Material~\cite{SM}.

\begin{theorem}[Quantum noise floor for Gaussian decoders]\label{thm:floor}
For phase-covariant $n$-mode Gaussian reverse decoders whose repaired trajectory has a uniform gap from purity,
$\nu_k(s)\ge \nu_{\min}>1$ (HHW units), one has the operational fidelity bound
\begin{equation}\label{eq:noisefloor}
\boxed{
\sup_{\rho_0\in\mathcal{C}}\!\big[-2\ln F(\rho_0,\hat{\rho}_S^{\,\mathrm{Gauss}}(\rho_0))\big]
\ \ge\
\cgeom(\nu_{\min})\,\mathcal{I}_{\mathrm{dec}}^{\mathrm{wc}}(S),
}
\end{equation}
where the geometric constant from Petz monotone-metric comparison~\cite{Petz1996,mosonyi2011quantum} is
\begin{equation}\label{eq:cgeom_main}
\cgeom(\nu)=\frac{1}{2\nu\ln\!\frac{\nu+1}{\nu-1}}\ \in\ (0,\tfrac{1}{4}].
\end{equation}
\end{theorem}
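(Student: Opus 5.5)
We prove the bound in three steps. First we rewrite the accumulated decoder irreversibility as an entropy-production deficit. Next we convert that deficit into a deficit in a relative-entropy-type divergence. Finally we compare the BKM geometry with the Bures geometry to reach fidelity.

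\emph{Step 1 (entropy deficit).} Fix $\rho_0\in\mathcal{C}$. Compare the repaired Gaussian decoder (diffusion $D^{\mathrm{Bayes}}+\Delta D_{\mathrm{qu}}$) with the formal, non-CP Bayes reverse flow, which is exact at the level of the Wigner--Fokker--Planck marginals. The Bayes flow reproduces the forward marginals backwards and would return $\rho_0$. By the Gaussian de~Bruijn identity~\eqref{eq:deBruijn}, the added diffusion increases the entropy rate by $\half\tr(\Delta D_{\mathrm{qu}}\Jq[\rho_s])$. Integrating over $s\in[0,S]$ gives the total excess
\[
\int_0^S\half\,\tr\!\big(\Delta D_{\mathrm{qu}}(s)\Jq[\rho_s]\big)\,ds .
\]
Taking the supremum over $\mathcal{C}$ gives $\mathcal{I}_{\mathrm{dec}}^{\mathrm{wc}}(S)$. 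I would rather phrase this infinitesimally, as a local quantity. The added diffusion over a step $ds$ displaces the state by a tangent vector whose squared length in the BKM metric is (up to the factor $\half$) $\tr(\Delta D_{\mathrm{qu}}\Jq)\,ds$. This is exactly the content of the de~Bruijn identity read as a second-order expansion of the relative entropy $S(\rho_s+\delta\rho\|\rho_s)$.

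\emph{Step 2 (BKM to Bures).} Petz's classification of monotone metrics~\cite{Petz1996} bounds every monotone metric between the maximal (RLD) and minimal (SLD/Bures) metrics. For a Gaussian state with symplectic eigenvalues $\nu_k$, both metrics diagonalize in the normal-mode basis. The ratio of Bures to BKM on each mode is then a scalar function of the mode's $\nu_k$. For the relevant (phase-covariant, number-diagonal) directions this ratio is $g_{\mathrm{Bures}}/g_{\BKM}\ge 1/(2\nu\ln\frac{\nu+1}{\nu-1})$, obtained from the logarithmic mean of the neighboring eigenvalues $\lambda_n\propto((\nu-1)/(\nu+1))^n$ of the thermal state~\cite{mosonyi2011quantum}. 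This function is monotone in $\nu$, so the uniform gap $\nu_k(s)\ge\nu_{\min}$ gives the constant $\cgeom(\nu_{\min})$ uniformly along the trajectory. Its range $(0,\tfrac14]$ follows from the limit $\nu\to\infty$ and from positivity for $\nu>1$. Phase covariance is used here so that $\Delta D_{\mathrm{qu}}$ and the tangent directions commute with the mode structure and the scalar comparison applies. Since $-2\ln F$ is, to second order, twice the Bures line element, each step contributes at least $\cgeom(\nu_{\min})\cdot\half\tr(\Delta D_{\mathrm{qu}}\Jq)\,ds$ in squared Bures length.

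\emph{Step 3 (integration).} Sum the infinitesimal contributions to bound $-2\ln F(\rho_0,\hat\rho_S^{\mathrm{Gauss}})$ from below by $\cgeom(\nu_{\min})$ times the integrated irreversibility, then take the supremum over $\rho_0$.

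\emph{Main obstacle.} Step 3 is the hard part. Fidelity deficits do not simply add along a path: the triangle inequality for the Bures angle runs the wrong way for a lower bound. The excess noise could also be partially undone later by the Bayes drift. My first attempt would exploit the structure of the problem. For phase-covariant Gaussian dynamics the repaired and ideal trajectories stay in a commuting (thermal-diagonal) family. On that family the problem is effectively classical, with geometric distributions, so the Bures and relative-entropy functionals become monotone one-parameter functions of the excess mean photon number. Because the added diffusion only ever increases that parameter monotonically, its contributions accumulate rather than cancel. The inequality is then closed by Step 2 and the data-processing inequality, which ensures that the later CP reverse maps cannot decrease the accumulated divergence from the ideal output. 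If the commuting reduction fails for squeezed references, I would fall back to stating the bound for the worst case over $\mathcal{C}$. The supremum in the theorem allows choosing $\rho_0$ aligned with the dominant added-noise direction.
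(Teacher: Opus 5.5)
Your local step uses the wrong tangent sector and is off by one order in $ds$.

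In the paper, $\cgeom$ arises because $\Jq$ is the BKM Fisher information of the \emph{displacement} family. Displacement tangents are linear in $Q,P$, so on a thermal state they connect only adjacent Fock levels. Hence only the ratios $t=\lambda^{\pm1}$, with $\lambda=(\nu-1)/(\nu+1)$, enter the Petz form, and $r(\lambda)=\cgeom(\nu)$; this is the adjacent-ratio lemma. Gaussian unitaries map displacement tangents to displacement tangents, and monotone metrics are unitarily invariant. So the bound passes through the Williamson normal form to squeezed states with $\nu_k\ge\nu_{\min}$, using that $\cgeom$ increases with $\nu$. Squeezing is therefore no obstacle at this step.

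``Number-diagonal'' directions are the wrong ones. There the ratio is $r(1)=1/4$. On other tangents (e.g.\ $a^2$, with ratio $\lambda^2$) the ratio drops below $\cgeom(\nu)$, and its infimum over all tangents is $0$. The Petz ordering $g^{\mathrm{SLD}}\preceq g^{\BKM}$ only gives the upper bound $1/4$.

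Separately, the trajectory increment $\delta\rho$ over $ds$ is $O(ds)$. So $S(\rho_s+\delta\rho\|\rho_s)$, like any squared metric length of the step, is $O(ds^2)$. The quantity $\half\tr(\Delta D_{\mathrm{qu}}\Jq)\,ds$ is instead a \emph{first-order} entropy increment. It equals the average of $S(W_\xi\rho_sW_\xi^\dagger\|\rho_s)$ over random displacements $\xi$ with covariance proportional to $\Delta D_{\mathrm{qu}}\,ds$. Your per-step claim that the squared Bures length is at least $\cgeom\cdot\half\tr(\Delta D_{\mathrm{qu}}\Jq)\,ds$ is therefore false. The Bures comparison controls $\rho_s$ against its displaced copies. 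By joint concavity of $\sqrt F$, that bounds the infidelity of the averaged (noised) state from above, not from below.

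Step 3 therefore has nothing valid to sum, and its proposed closure also fails.
\begin{itemize}
\item The commuting thermal-diagonal reduction exists only for $r=0$. There the no-go theorem gives $M^{\mathrm{Bayes}}\succeq 0$, so the minimal repair vanishes and the bound is empty. The regime $\cosh(2r)>\nu$, where repair is forced, is squeezed.
\item Data processing points the wrong way: a common CP map can only \emph{decrease} a divergence. It also does not apply here, because the ideal trajectory is generated by the non-CP Bayes flow, not by a channel shared with the repaired one.
\item The fallback to the supremum adds nothing, since $\mathcal{I}_{\mathrm{dec}}^{\mathrm{wc}}$ is itself a supremum.
\end{itemize}

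You were right that this is the crux, and the difficulty is structural. Suppose the unrepaired flow would return $\rho_0$ exactly, e.g.\ $\mathcal{C}=\{\rho_0\}$ with $\rho_0$ Gaussian and equal to the running reference at $s=0$. Then the endpoint $-2\ln F$ is quadratic in the accumulated extra covariance, while $\mathcal{I}_{\mathrm{dec}}^{\mathrm{wc}}$ is linear in $\Delta D_{\mathrm{qu}}$. So for small repairs the right side exceeds the left. This holds whether the repair is voluntary or forced; for $\rho_0$ just past $\cosh(2r)=\nu$ the forced defect stays uniformly small along the trajectory. No accounting of local metric comparisons can give the stated inequality without an extra hypothesis.

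The paper does not supply one. Its supplement gives exactly the local ingredients: de~Bruijn, the adjacent-ratio comparison, and $-2\ln F\ge 2\mathcal{A}^2$. It then asserts that these integrate to the bound, so there is no accumulation argument there for you to borrow.
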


\emph{Proof sketch.}
Equation~\eqref{eq:deBruijn} measures entropy production induced by CP-enforcing diffusion in the BKM metric.
Under a uniform gap from purity $\nu_{\min}>1$, Petz monotone-metric comparison yields a state-independent bound
$g^{\mathrm{Bures}}\succeq \cgeom(\nu_{\min})\,g^{\mathrm{BKM}}$ on the relevant thermal spectra, which converts BKM cost into fidelity loss and integrates to~\eqref{eq:noisefloor}. Full derivations are in Supplemental Material~\cite{SM}. \hfill$\square$

\paragraph{Numerical validation.}
Figure~\ref{fig:CP} validates Theorem~\ref{thm:nogo} by mapping $\lambda_{\min}(M^{\mathrm{Bayes}})$ across the $(\nu,r)$ plane and confirming the sharp boundary $\cosh(2r)=\nu$ (HHW vacuum $\nu=1$).
Figure~\ref{fig:noisefloor} validates Theorem~\ref{thm:floor} by comparing worst-case infidelity $-2\ln F$ to the lower bound $\cgeom(\nu_{\min})\,\mathcal{I}_{\mathrm{dec}}^{\mathrm{wc}}$ versus depth $S$.
Verification checks are given in Supplemental Material~\cite{SM}.

\begin{figure}[t]
\centering
\includegraphics[width=\columnwidth]{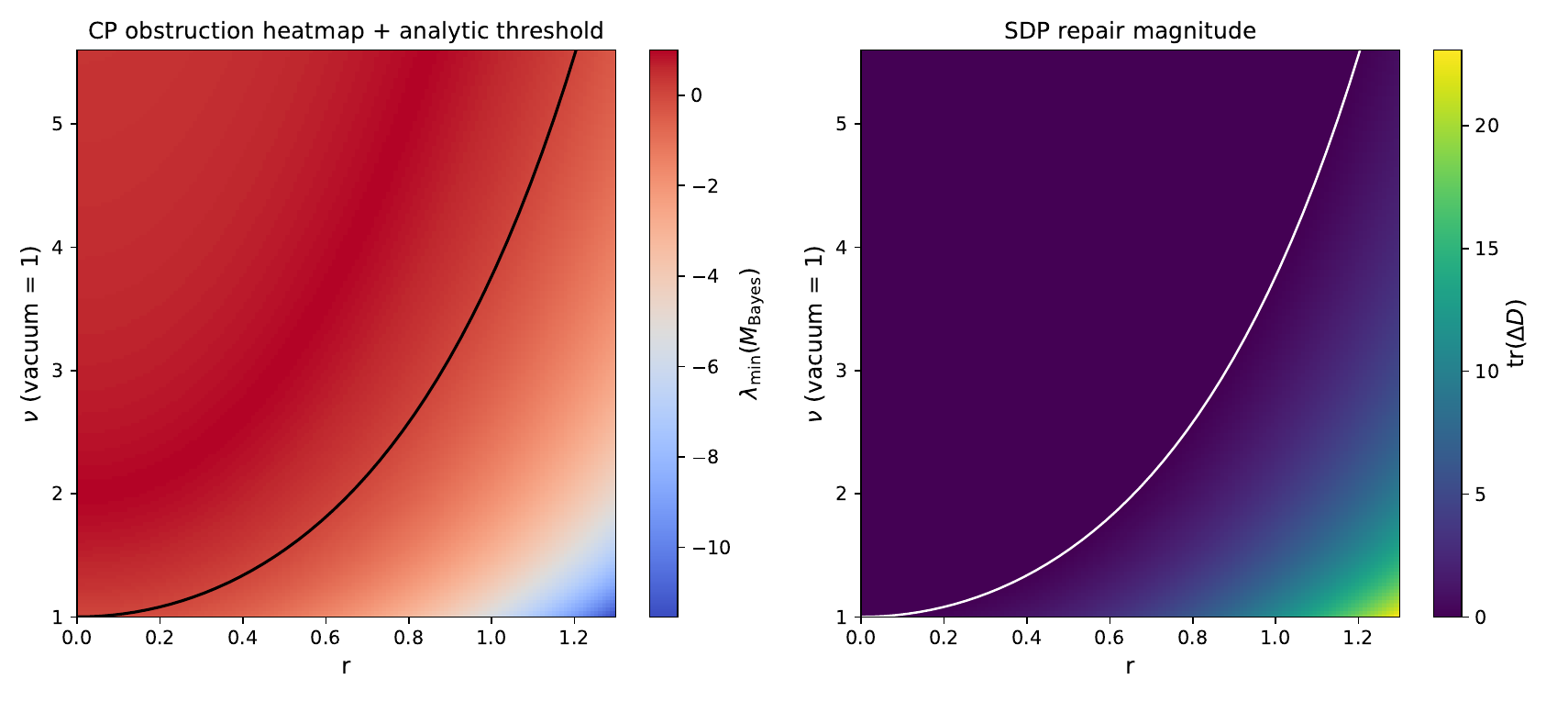}
\caption{No-go phase diagram for the fixed-diffusion score-lift (HHW conventions, vacuum $\nu=1$).
\emph{Left:} minimum eigenvalue $\lambda_{\min}(M^{\mathrm{Bayes}})$ over $(\nu,r)$; CP violation occurs exactly where $\lambda_{\min}<0$.
The analytic threshold curve $\cosh(2r)=\nu$ is overlaid.
\emph{Right:} magnitude $\tr(\Delta D_{\mathrm{qu}}^\star)$ of the minimal CP repair diffusion (SDP), showing mandatory noise injection beyond the CP boundary.}
\label{fig:CP}
\end{figure}

\begin{figure}[t]
\centering
\includegraphics[width=\columnwidth]{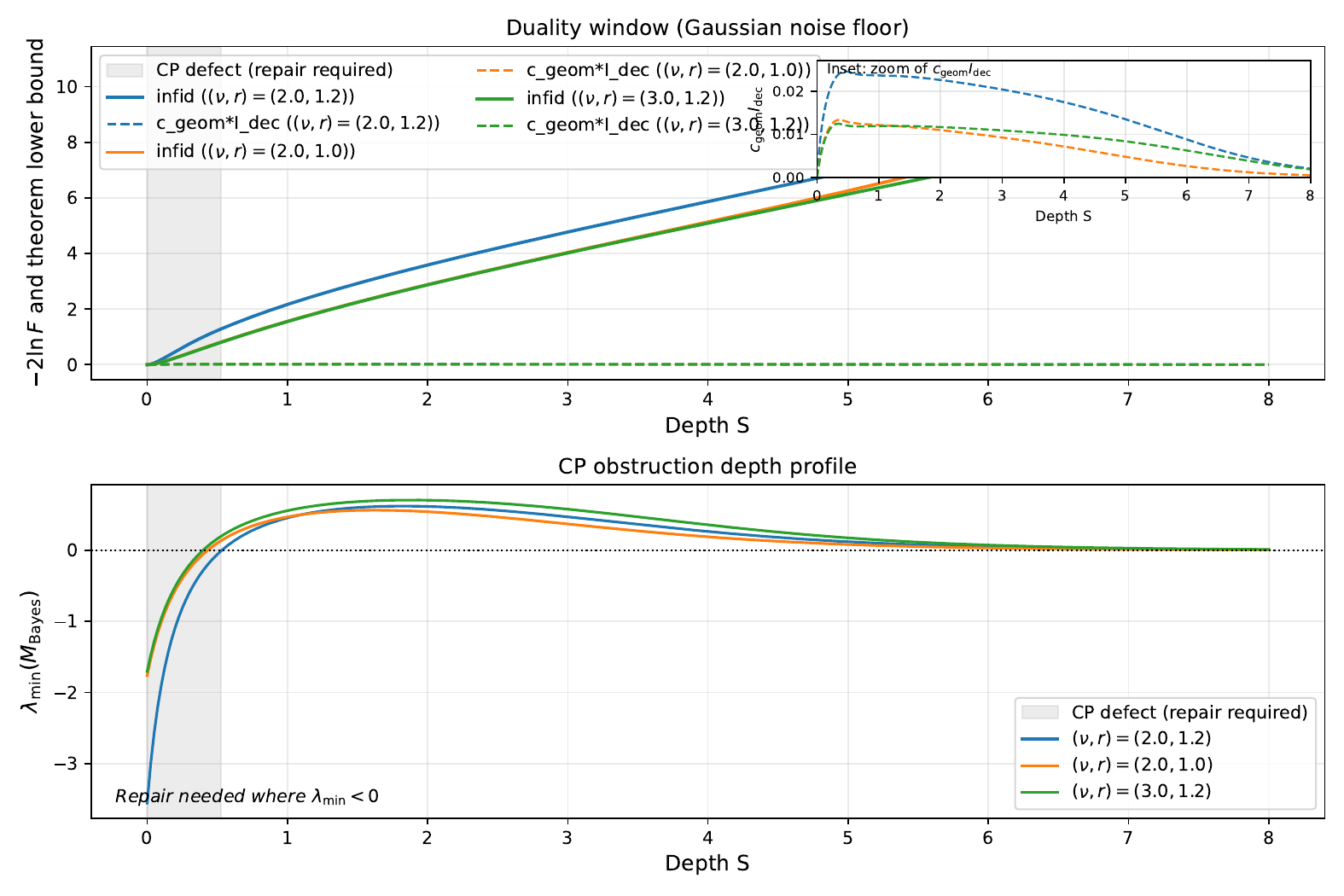}
\caption{Quantum noise floor for Gaussian decoders.
Worst-case infidelity $-2\ln F$ (solid) and the lower bound $\cgeom(\nu_{\min})\,\mathcal{I}_{\mathrm{dec}}^{\mathrm{wc}}(S)$ (dashed) versus depth $S$ for three representative HHW settings $(\nu,r)$.
The shaded band marks depths where $\lambda_{\min}(M^{\mathrm{Bayes}})<0$ (CP defect, repair required), and the inset zooms the theorem lower bound.
Details are in Supplemental Material~\cite{SM}.}
\label{fig:noisefloor}
\end{figure}

\paragraph{Discussion.}
Our results expose a structural quantum--classical gap in diffusion reversal.
Classically, score reversal is always a valid diffusion at fixed $D^{\mathrm{cl}}\succeq 0$~\cite{Anderson1982,Haussmann1986}.
Quantum mechanically, the same Wigner/Bayes score-lift does not in general correspond to a CP Gaussian generator at fixed diffusion: the HHW generator CP condition~\eqref{eq:genCP} forces additional diffusion once the reference becomes sufficiently nonclassical.
Theorem~\ref{thm:nogo} gives a sharp threshold, and Theorem~\ref{thm:floor} gives an operational fidelity floor that any repaired Gaussian reverse decoder must respect.

\emph{Quantum generative models.}
Within the Gaussian semigroup ansatz, the learned score is insufficient to specify a physical reverse channel at fixed diffusion; CP enforcement induces an irreducible noise cost quantified by~\eqref{eq:noisefloor}.
Non-Gaussian decoders (including Petz-type recovery maps) lie outside the fixed-diffusion Gaussian reverse-semigroup ansatz and may evade Gaussian-specific bounds; the present contribution is to quantify the unavoidable cost \emph{within} the Markovian Gaussian reverse-semigroup class.

\emph{Thermodynamics of quantum channels.}
The irreversibility functional $\mathcal{I}_{\mathrm{dec}}$ measures entropy production forced by CP via the quantum Fisher information and the de~Bruijn identity, connecting diffusion reversal to thermodynamic costs of physical recovery maps.

\emph{Outlook.}
Beyond the phase-covariant one-mode case, multimode channels with noncommuting squeezing directions yield richer CP-defect geometry and repaired-noise structure.
Extending sharp operational floors and identifying the tightest near-pure-state geometry remain key targets.

\bibliographystyle{apsrev4-2}
\bibliography{ref1}

\clearpage
\onecolumngrid

\setcounter{section}{0}
\setcounter{equation}{0}
\setcounter{figure}{0}
\setcounter{table}{0}
\renewcommand{\thesection}{S\arabic{section}}
\renewcommand{\theequation}{S\arabic{equation}}
\renewcommand{\thefigure}{S\arabic{figure}}
\renewcommand{\thetable}{S\arabic{table}}

\begin{center}
{\large\bf Supplemental Material}\\[0.5ex]
for\\[0.5ex]
\emph{Quantum Diffusion Models: Complete-Positivity Obstruction to Gaussian Score Reversal}\\[1ex]
Ammar Fayad
\end{center}

\paragraph*{Contents.}
This Supplemental Material contains: (i) derivation of the generator CP matrix from the HHW Gaussian CP criterion; (ii) equivalence between HHW CP and physicality of $(\Phi\otimes\id)$ on a two-mode squeezed vacuum (TMSV), yielding a concrete CP witness; (iii) Petz monotone-metric comparison yielding $\cgeom(\nu)$ on the displacement tangent sector relevant to $\Jq$; (iv) an endpoint conversion for $-2\ln F$; (v) the SDP defining the minimal CP repair diffusion; and (vi) definitions of the verification checks used in the numerical validation.

\section{Conventions and objects used in the Letter}

\paragraph*{HHW continuous-variable conventions.}
We adopt the HHW normalization: canonical operators satisfy $[Q_j,P_k]=i\delta_{jk}$ and the symplectic form is
\begin{equation}
\sympl=\bigoplus_{j=1}^n
\begin{pmatrix}
0 & 1\\
-1 & 0
\end{pmatrix}.
\end{equation}
A real covariance matrix $\cov$ is physical iff $\cov+i\sympl\succeq 0$, equivalently all symplectic eigenvalues satisfy $\nu_k\ge 1$ (vacuum $\nu_k=1$).~\cite{HHW2010}

\paragraph*{Gaussian channel parameters.}
A centered Gaussian channel $\Phi$ acts on Weyl operators by
\begin{equation}
W_\xi \ \mapsto\ W_{X^T\xi}\exp\!\left(-\half\,\xi^T Y \xi\right),
\end{equation}
and on covariances by $\cov\mapsto X\cov X^T+Y$ for real matrices $X,Y$.

\paragraph*{Gaussian semigroup generator (covariance form).}
For a (possibly time-inhomogeneous) Gaussian dynamical semigroup, the covariance evolves as
\begin{equation}
\frac{d}{dt}\cov_t=\drift_t\cov_t+\cov_t\drift_t^T+\diffu_t.
\end{equation}

\section{From HHW CP to the generator CP matrix}
\label{sec:genCP_SM}

\paragraph*{HHW CP criterion.}
A centered Gaussian channel $(X,Y)$ is completely positive iff~\cite{HHW2010}
\begin{equation}
Y \ \succeq\ i\!\left(\sympl - X\sympl X^T\right).
\label{eq:hhwcp_SM}
\end{equation}

\paragraph*{Infinitesimal expansion.}
Consider an infinitesimal step over $dt$ with $X=\id+\drift\,dt+o(dt)$ and $Y=\diffu\,dt+o(dt)$.
Then
\begin{align}
X\sympl X^T
&=(\id+\drift\,dt)\sympl(\id+\drift^T dt)+o(dt) \nonumber\\
&=\sympl+(\drift\sympl+\sympl\drift^T)\,dt+o(dt),
\end{align}
hence
\begin{equation}
i(\sympl-X\sympl X^T)=-i(\drift\sympl+\sympl\drift^T)\,dt+o(dt).
\end{equation}
Substituting into~\eqref{eq:hhwcp_SM} and dividing by $dt>0$ yields the generator condition
\begin{equation}
\boxed{\quad
\CPmat \;:=\;\diffu+i(\drift\sympl+\sympl\drift^T)\ \succeq\ 0.
\quad}
\label{eq:genCP_SM}
\end{equation}

\section{CP violation witnessed by an unphysical TMSV output (Choi/TMSV test)}
\label{sec:tmsv}

\paragraph*{TMSV covariance (HHW units).}
For one mode in each subsystem $A,B$, define the TMSV covariance with parameter $\mu>1$ by
\begin{equation}
\cov_{AB}(\mu)=
\begin{pmatrix}
\mu\,\id_2 & \sqrt{\mu^2-1}\,Z\\
\sqrt{\mu^2-1}\,Z & \mu\,\id_2
\end{pmatrix},
\qquad
Z=\mathrm{diag}(1,-1),
\label{eq:tmsv}
\end{equation}
and $\sympl_{AB}=\sympl\oplus\sympl$.

\paragraph*{Schur complement.}
Let $\Phi$ be a centered one-mode Gaussian channel $(X,Y)$ acting on subsystem $A$. The output covariance is
\begin{equation}
\cov'_{AB}(\mu)=
\begin{pmatrix}
\mu XX^T + Y & \sqrt{\mu^2-1}\,XZ\\
\sqrt{\mu^2-1}\,ZX^T & \mu\,\id_2
\end{pmatrix}.
\end{equation}
Physicality requires $\cov'_{AB}(\mu)+i\sympl_{AB}\succeq 0$.
Write
\begin{equation}
H(\mu):=\cov'_{AB}(\mu)+i\sympl_{AB}=
\begin{pmatrix}
A(\mu)+i\sympl & C(\mu)\\
C(\mu)^T & B(\mu)+i\sympl
\end{pmatrix},
\end{equation}
with $A(\mu)=\mu XX^T+Y$, $B(\mu)=\mu\,\id_2$, and $C(\mu)=\sqrt{\mu^2-1}\,XZ$.
For $\mu>1$,
\begin{equation}
(\mu\,\id_2+i\sympl)^{-1}=\frac{1}{\mu^2-1}\,(\mu\,\id_2-i\sympl),
\end{equation}
since $(\mu\id_2+i\sympl)(\mu\id_2-i\sympl)=(\mu^2-1)\id_2$.
The Schur complement criterion gives
\begin{equation}
H(\mu)\succeq 0
\ \Longleftrightarrow\
B(\mu)+i\sympl\succeq 0\ \text{ and }\ S(\mu):=A(\mu)+i\sympl - C(\mu)\,(B(\mu)+i\sympl)^{-1}C(\mu)^T\succeq 0.
\end{equation}
Using $Z\sympl Z=-\sympl$, one finds
\begin{equation}
S(\mu)=Y+i\!\left(\sympl - X\sympl X^T\right),
\end{equation}
independent of $\mu$. Therefore,
\begin{equation}
\boxed{\quad
\cov'_{AB}(\mu)+i\sympl_{AB}\succeq 0\ \ \forall \mu>1
\quad\Longleftrightarrow\quad
Y\succeq i(\sympl - X\sympl X^T).
\quad}
\label{eq:cp_tmsv_equiv}
\end{equation}

\paragraph*{Infinitesimal witness.}
For $X=\id+\drift\,dt$ and $Y=\diffu\,dt$, Eq.~\eqref{eq:cp_tmsv_equiv} gives
\begin{equation}
S(\mu)=dt\Big(\diffu+i(\drift\sympl+\sympl\drift^T)\Big)+o(dt)=dt\,\CPmat+o(dt).
\end{equation}
Thus $\CPmat\not\succeq 0$ implies violation of the uncertainty relation for $(\Phi_{dt}\otimes\id)(\mathrm{TMSV}_\mu)$ for sufficiently small $dt>0$.

\section{Reverse-time sign conventions and CP-spectrum invariance under \texorpdfstring{$\drift\mapsto-\drift$}{K->-K}}
\label{sec:sign}

\begin{lemma}[Spectrum invariance under drift sign flip]\label{lem:sign}
Let $\CPmat(\drift):=\diffu+i(\drift\sympl+\sympl\drift^T)$ for real $\drift$ and symmetric real $\diffu$.
Then $\CPmat(-\drift)=\CPmat(\drift)^{*}$, hence $\CPmat(\drift)$ and $\CPmat(-\drift)$ have identical real spectra.
\end{lemma}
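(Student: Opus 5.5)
The plan is to compute the complex conjugate of $\CPmat(\drift)$ entrywise, using only that $\diffu$, $\drift$ and $\sympl$ are real. Since complex conjugation acts only on the scalar $i$, we get
\[
\CPmat(\drift)^{*}=\diffu-i(\drift\sympl+\sympl\drift^T)=\diffu+i\bigl((-\drift)\sympl+\sympl(-\drift)^T\bigr)=\CPmat(-\drift),
\]
which is the first claim. I will also record that $\CPmat(\drift)$ is Hermitian. Indeed, $\diffu$ is real symmetric, and $\sympl^T=-\sympl$ gives $(\drift\sympl+\sympl\drift^T)^T=\sympl^T\drift^T+\drift\sympl^T=-(\sympl\drift^T+\drift\sympl)$. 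So $\drift\sympl+\sympl\drift^T$ is real antisymmetric, and $i$ times a real antisymmetric matrix is Hermitian. Hence both $\CPmat(\drift)$ and $\CPmat(-\drift)$ have real spectra.

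For the spectral statement I will use that the entrywise conjugate of a Hermitian matrix $A$ equals its transpose: $A^{*}=(A^\dagger)^T=A^T$. A matrix and its transpose share the same characteristic polynomial, so $A$ and $A^{*}$ have the same eigenvalues with multiplicities. A more direct route gives the same result. If $A v=\lambda v$ with $\lambda$ real, conjugating gives $A^{*}v^{*}=\lambda v^{*}$. The map $v\mapsto v^{*}$ is an antilinear bijection sending eigenspaces to eigenspaces, and it preserves dimension, so multiplicities agree.

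Combining the two paragraphs, $\CPmat(\drift)$ and $\CPmat(-\drift)=\CPmat(\drift)^{*}$ have identical real spectra. In particular $\lambda_{\min}$ is unchanged, so $\CPmat\succeq 0$ holds for one sign of the drift iff it holds for the other. There is no real obstacle in this argument. The only point needing care is confirming Hermiticity through the antisymmetry of $\sympl$. Without it, "real spectra" would not follow, and equality of spectra would rest on the weaker fact that $\det(\CPmat-\lambda)$ is conjugated, which only gives conjugate-symmetric spectra.
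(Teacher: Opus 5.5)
Your proposal is correct and follows essentially the same route as the paper: conjugate entrywise using the reality of $D$, $K$, and $\sigma$ to get $M(-K)=M(K)^{*}$, then use Hermiticity to conclude that the real spectra coincide. The difference is only in detail: you explicitly verify Hermiticity via the antisymmetry of $\sigma$ and justify the spectral invariance through $A^{*}=A^{T}$, whereas the paper simply asserts both.
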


\begin{proof}
Since $\diffu$ is real and $\sympl$ is real antisymmetric, $(\drift\sympl+\sympl\drift^T)$ is real, so
\[
\CPmat(-\drift)=\diffu-i(\drift\sympl+\sympl\drift^T)=\big(\diffu+i(\drift\sympl+\sympl\drift^T)\big)^*=\CPmat(\drift)^*.
\]
Complex conjugation preserves eigenvalues because $\CPmat(\drift)$ is Hermitian.
\end{proof}

\section{Petz metric comparison and the constant \texorpdfstring{$\cgeom(\nu)$}{cgeom(nu)}}
\label{sec:cgeom}

\paragraph*{Thermal spectrum.}
A one-mode gauge-invariant Gaussian state with covariance $\cov=\nu\,\id_2$ ($\nu\ge 1$) is a thermal state diagonal in the Fock basis with mean occupation $n=(\nu-1)/2$ (HHW units). Its eigenvalues satisfy
\begin{equation}
\lambda:=\frac{p_{k+1}}{p_k}=\frac{n}{n+1}=\frac{\nu-1}{\nu+1}\in[0,1).
\label{eq:lambda_ratio}
\end{equation}

\paragraph*{Monotone metrics.}
In an eigenbasis of $\rho$, Petz monotone metrics admit the form~\cite{Petz1996,mosonyi2011quantum}
\begin{equation}
g_f^\rho(X,X)=\sum_{i,j}\frac{|X_{ij}|^2}{p_j\,f(p_i/p_j)}.
\label{eq:petz_form}
\end{equation}
Two choices are used:
(i) SLD/Bures with $f_{\mathrm{SLD}}(t)=(1+t)/2$ and $g^{\Bures}=\frac14 g^{\mathrm{SLD}}$; and
(ii) BKM with $f_{\mathrm{BKM}}(t)=(t-1)/\ln t$.
Define
\begin{equation}
r(t)=\frac{g^{\Bures}}{g^{\BKM}}
=\frac14\,\frac{f_{\mathrm{BKM}}(t)}{f_{\mathrm{SLD}}(t)}
=\frac12\,\frac{t-1}{(t+1)\ln t}.
\label{eq:rdef}
\end{equation}

\begin{lemma}[Adjacent-ratio restriction for displacement tangents]\label{lem:adjacent}
Let $\rho_\nu$ be a one-mode thermal state. Any displacement tangent operator is a real linear combination of $Q$ and $P$, equivalently of $a+a^\dagger$ and $(a-a^\dagger)/i$. In the Fock basis, these have matrix elements only between adjacent levels. Consequently, in~\eqref{eq:petz_form} restricted to displacement tangents, the only eigenvalue ratios that appear are $t=\lambda$ and $t=\lambda^{-1}$.
\end{lemma}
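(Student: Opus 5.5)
The plan is to prove the lemma in three short steps: identify the tangent operators, compute their Fock-basis matrix elements, and then substitute into the Petz form~\eqref{eq:petz_form}.

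\emph{Tangent operators.} A displacement $D(\xi)=\exp(i(\xi_2 Q-\xi_1 P))$ acts by conjugation, $\rho\mapsto D(\xi)\rho D(\xi)^\dagger$. Differentiating at $\xi=0$ gives the tangent $X=i[\xi_2 Q-\xi_1 P,\rho_\nu]$, so the tangents are commutators of $\rho_\nu$ with real linear combinations of $Q$ and $P$. Using $Q=(a+a^\dagger)/\sqrt2$ and $P=(a-a^\dagger)/(i\sqrt2)$, the generator $G=\xi_2 Q-\xi_1 P$ is a complex combination $\alpha a+\bar\alpha a^\dagger$. The lemma's phrasing ("a real linear combination of $Q$ and $P$") refers to this generator. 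The point to make precise is that the tangent operator has the same support pattern as $G$. This holds because $\rho_\nu$ is diagonal in the Fock basis, $\rho_\nu=\sum_k p_k|k\rangle\langle k|$, which gives
\[
\langle i|[G,\rho_\nu]|j\rangle=(p_j-p_i)\,G_{ij}.
\]
So $[G,\rho_\nu]$ has nonzero entries only where $G$ does.

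\emph{Support of $G$.} From $a|k\rangle=\sqrt{k}|k-1\rangle$ and $a^\dagger|k\rangle=\sqrt{k+1}|k+1\rangle$, the generator $G$ has nonzero entries only for $|i-j|=1$. The tangent $X$ is therefore tridiagonal with zero diagonal.

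\emph{Ratios in the Petz sum.} In~\eqref{eq:petz_form} only pairs with $|X_{ij}|^2\neq0$ contribute, so only $j=i\pm1$ appear. By~\eqref{eq:lambda_ratio} the thermal weights are geometric, $p_k\propto\lambda^k$. Hence $p_i/p_j=\lambda^{i-j}$, which equals $\lambda$ when $i=j+1$ and $\lambda^{-1}$ when $i=j-1$. This is exactly the claimed restriction. Since $\lambda\in[0,1)$ for $\nu\ge1$, the case $\nu=1$ needs a remark: there $\lambda=0$, the weights $p_k$ vanish for $k\ge1$, and the ratio $t=\lambda^{-1}$ is undefined. I would handle this by restricting to $\nu>1$, consistent with the uniform-gap hypothesis $\nu_{\min}>1$ of Theorem~\ref{thm:floor}.

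\emph{Main obstacle.} The only subtle point is interpretive. The lemma calls the tangent a "combination of $Q$ and $P$", whereas the object that actually enters the metric is the commutator $[G,\rho_\nu]$, or equivalently its symmetric logarithmic-derivative representative. I would resolve this by the entrywise identity above: the support pattern is preserved under commutation with a Fock-diagonal state, and the same holds for any Fock-diagonal superoperator acting on $G$. Consequently the conclusion holds whichever representative convention is used for the monotone metric.
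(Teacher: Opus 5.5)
Your proposal is correct and follows the same route as the paper: $a,a^\dagger$ are one-step banded in the Fock basis, and the thermal spectrum is geometric with $p_{k+1}/p_k=\lambda$, so only the ratios $t=\lambda^{\pm1}$ enter the Petz sum. You also make precise two points the paper leaves implicit. The identity $\langle i|[G,\rho_\nu]|j\rangle=(p_j-p_i)G_{ij}$ shows that the actual tangent $i[G,\rho_\nu]$, not only the generator $G$, is supported on adjacent levels. Restricting to $\nu>1$ handles the fact that $t=\lambda^{-1}$ is undefined at the pure vacuum.
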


\begin{proof}
In the Fock basis $\{|k\rangle\}_{k\ge 0}$, $a|k\rangle=\sqrt{k}\,|k-1\rangle$ and $a^\dagger|k\rangle=\sqrt{k+1}\,|k+1\rangle$, hence $Q$ and $P$ are one-step banded. For $\rho_\nu=\sum_k p_k |k\rangle\langle k|$ with geometric spectrum $p_{k+1}/p_k=\lambda$, only ratios $p_{k+1}/p_k=\lambda$ and $p_k/p_{k+1}=\lambda^{-1}$ occur in nonzero off-diagonal terms.
\end{proof}

\paragraph*{Evaluation of $\cgeom(\nu)$.}
Since $r(t)=r(1/t)$, the displacement-sector ratio evaluates to $r(\lambda)$:
\begin{align}
r(\lambda)
&=\frac12\,\frac{\lambda-1}{(\lambda+1)\ln\lambda}
=\frac12\,\frac{1-\lambda}{(1+\lambda)\ln(1/\lambda)}.
\end{align}
Using~\eqref{eq:lambda_ratio},
\begin{equation}
\frac{1-\lambda}{1+\lambda}=\frac{1}{\nu},
\qquad
\ln(1/\lambda)=\ln\!\left(\frac{\nu+1}{\nu-1}\right).
\end{equation}
Therefore, on the displacement tangent sector of thermal (gauge-invariant) states,
\begin{equation}
\boxed{\quad
g^{\Bures}\ \succeq\ \cgeom(\nu)\,g^{\BKM},
\qquad
\cgeom(\nu)=\frac{1}{2\nu\ln\!\left(\frac{\nu+1}{\nu-1}\right)}\in(0,1/4].
\quad}
\label{eq:cgeom_SM}
\end{equation}
For multimode gauge-invariant Williamson normal modes with symplectic eigenvalues $\nu_k\ge\nu_{\min}>1$, the same bound holds with $\cgeom(\nu_{\min})$.

\paragraph*{Near-pure-state degeneration.}
As $\nu\to 1^+$, $\ln\frac{\nu+1}{\nu-1}\to\infty$, hence $\cgeom(\nu)\to 0$, motivating the Letter’s explicit $\nu_{\min}>1$ assumption.

\section{Endpoint fidelity bound}
\label{sec:endpoint}

\paragraph*{Endpoint conversion.}
Let $\mathcal{A}(\rho,\sigma):=\arccos\!\sqrt{F(\rho,\sigma)}$ be the Bures angle. Since
$F=\cos^2\mathcal{A}$ and $-2\ln(\cos^2 x)\ge 2x^2$ for $x\in[0,\pi/2)$, one has
\begin{equation}
-2\ln F(\rho,\sigma)\ \ge\ 2\,\mathcal{A}(\rho,\sigma)^2.
\label{eq:logF_angle}
\end{equation}

\paragraph*{Scope of use.}
The bound in the Letter is obtained by combining the exact local de~Bruijn identity (BKM/DQFI cost density) with the uniform monotone-metric dominance~\eqref{eq:cgeom_SM} on the same displacement tangent sector.

\section{Minimal CP repair as a convex SDP}
\label{sec:sdp}

At each depth/time $s$, the Bayes/score-lift produces a candidate generator CP matrix $\CPmat^{\mathrm{Bayes}}(s)$.
A repaired Gaussian reverse decoder adds diffusion $\Delta \diffu_{\mathrm{qu}}(s)\succeq 0$ such that
\begin{equation}
\CPmat^{\mathrm{Bayes}}(s)+\Delta \diffu_{\mathrm{qu}}(s)\ \succeq\ 0.
\end{equation}
The pointwise repair is defined by the convex program
\begin{equation}
\boxed{\quad
\Delta \diffu_{\mathrm{qu}}^\star(s)
\in
\arg\min_{\substack{\Delta \diffu=\Delta \diffu^T\succeq 0\\ \CPmat^{\mathrm{Bayes}}(s)+\Delta \diffu\succeq 0}}
\tr\!\big(\Delta \diffu\;\Jq[\tau_s]\big).
\quad}
\label{eq:sdp_SM}
\end{equation}

\end{document}